\newtheorem{theorem}{Theorem}
\newtheorem{lemma}{Lemma}
\newcommand{\g}{$G$}
\newcommand{\ri}{_{r\to i}}
\newcommand{\sj}{_{s\to j}}
\newcommand{\ind}{\in\partial}
\newcommand{\jnor}{\ind j \setminus r}
\newcommand{\rnoi}{\ind r \setminus i}
\newcommand{\deq}{\stackrel{\mathrm{d}}{=}}
\newcommand{\pha}{\mathsf{phase\ 1}}
\DeclareMathOperator*{\argmax}{arg\,max}
\begin{document}
\title{The statistical mechanics of random set packing and a generalization of the Karp-Sipser algorithm}

\author{C. Lucibello}
\affiliation{Dipartimento di Fisica, Universit\`a ``La Sapienza'', P.le A. Moro 2, I-00185, Rome, Italy}
\author{F. Ricci-Tersenghi}
\affiliation{Dipartimento di Fisica, INFN -- Sezione di Roma1, CNR-IPCF UOS Roma Kerberos, Universit\`a ``La Sapienza'', P.le A. Moro 2, I-00185, Rome, Italy}
\begin{abstract}
We analyse the asymptotic behaviour of random instances of the Maximum Set Packing (MSP) optimization problem, also known as Maximum Matching or Maximum Strong Independent Set on Hypergraphs. We give an analytical prediction of the MSPs size using the 1RSB cavity method from statistical mechanics of disordered systems. We also propose a heuristic algorithm, a generalization of the celebrated Karp-Sipser one, which allows us to rigorously prove that the replica symmetric cavity method prediction is exact for certain problem ensembles and breaks down when a core survives the leaf removal process. The $e$-phenomena threshold discovered by Karp and Sipser, marking the onset of core emergence and of replica symmetry breaking, is elegantly generalized to  $c_s = \frac{e}{d-1}$ for one of the ensembles considered, where $d$ is the size of the sets.
\end{abstract}
\maketitle
\section{Introduction}
The Maximum Set Packing is a very much studied problem in combinatorial optimization, one of Karp's twenty-one  NP-complete problems. Given a set $F=\{1,\ldots,M\}$ and a collection of its subsets $\mathcal{S}=\{S_i \ |  S_i\subseteq F, i\in V\}$ labeled by $V=\{1,\ldots,N\}$, a \textit{set packing} (SP) is a collection of the subsets $S_i$ such that they are pairwise disjoint. The size of a SP $\mathcal{S}'\subseteq \mathcal{S}$ is $|\mathcal{S}'|$. A 
\textit{maximum set packing} (MSP) is a SP of maximum size. The integer programming formulation of the MSP problem reads
\begin{align}
\textit{maximize}\qquad &\sum_{i\in V} n_i 
 \label{defopt}\\ 
\textit{subject to}\qquad &\sum_{i\, :\, r \in S_i} n_i \leq 1 \qquad &\forall r\in F
 \label{deffact}\\
 &n_i = \{0,1\} \qquad &\forall i\in V
 \label{defni}
\end{align}
The MSP problem, also known in literature as the matching problem on hypergraphs or the strong independent set problem on hypergraphs, is a an NP-Hard problem. 
This general formulation however can be specialized to obtain two other famous optimization problems: 
 the restriction of the MSP problem to sets $S_i$ of size 2 corresponds to the problem of maximum matching on ordinary graphs and can be solved in polynomial time \cite{Micali1980}; the restriction where each element of $F$ appears exactly 2 times in $\mathcal{S}$ is the maximum independent set and belongs to the NP-Hard class.
 Another common specialization of the general MSP problem, known as $k$-set packing, is that in which all sets $S_i$ have size at most $k$. This is one of the most studied specializations in the computer science community, 
the efforts concentrating on minimal degree conditions to obtain a perfect matching \cite{Alon1997}, linear relaxations \cite{Furedi1993,Chan2011} and approximability conditions \cite{Rodl1996,Halldorsson2000,Hazan2006}. Motivated by this interest, we choose a $k$-set packing problem ensemble as the principal application of the general analytical framework developed in the following sections. The asymptotic behaviour of random sparse instances of the MSP problem  have not been investigated by mathematicians and computer scientists, only in the matching \cite{Karp1981a} and independent set \cite{Wormald1999} restrictions some work has been done. Extending some theorems of \cite{Karp1981a} (on which a part of this work is greatly inspired) to a greater class of problem ensembles is some of the main aims of the present work.
%NOTE applications missing: si può aggiugere il train routing per il weighted set packing

On the other hand also the statistical physics literature is lacking an accurate study of the random MSP problem. One of its specialization though, the matching problem, has been covered since the beginning of the physicists' interest in optimization problems, with the work of Parisi and M\'ezard on the weighted and fully connected version of the problem \cite{Parisi1985,Parisi1986}. More recently the matching problem on sparse random graphs has also been accurately studied \cite{Zhou2003a,Zdeborova2006} using the cavity method technique. Also  the independent set problem on random graphs \cite{Pin} and the dual problem to set packing, the set covering problem \cite{Tarzia2007}, received some attention by the disordered physics community.
The SP problem was investigated with the cavity method formalism in a disguised form, as a glass model on a generalized bethe lattice, in \cite{Weigt2003,Hansen-Goos2005}. This corresponds, as we shall see in the next section, to a factor graph ensemble with fixed factor and variable degrees, thus we will not cover this case in Section \ref{sec:applications}.

The paper is organized as follows:
\begin{itemize}
\item In Section \ref{sec:statphys} we map the MSP problem (\ref{defopt}-\ref{defni}) into a statistical physical model defined on a factor graph and relate the MSP size to the density $\rho$ at infinite chemical potential.
\item We introduce the replica symmetric (RS) cavity method in Section \ref{sec:rs} and  give an estimate for the average MSPs size on sparse factor graph ensembles in the thermodynamic limit.
\item In Section \ref{sec:rsb} we establish a criterion for the validity of the RS ansatz and introduce the 1RSB formalism.
\item In Section \ref{sec:gks} we propose a generalization of the Karp-Sipser heuristic algorithm \cite{Karp1981a} to the MSP problem and prove the validity of the RS ansatz for certain ensemble of problems. Moreover we find a relationship between a core emergence phenomena and the breaking of replica symmetry breaking.
\item Section \ref{sec:num} describes the numerical simulations performed.
\item In Section \ref{sec:applications} we apply the analytical tools developed to some problem ensembles. We compare the numerical results obtained from an exact algorithm with the analytical predictions, focusing to greater extent to one ensemble modelling the $k$-set packing.
\end{itemize}

\section{Statistical physics description}
\label{sec:statphys}
In order to turn the MSP combinatorial optimization problem into a useful statistical physical model let us recast  eqs. (\ref{defopt}-\ref{defni}) into a graphical model using the factor graph formalism \cite{kschichang01,Montanari2009}. We define our variable nodes set to be $V$ and to each $i\in V$ we associate a variable $n_i$ taking values in $\{0,1\}$ as in eq. (\ref{defni}). $F$ will be our factor nodes set, as its elements acts as hard constrains on the variables $n_i$ through eq. (\ref{deffact}). The edge set $E$ is then naturally defined as $E=\{(i,r) | i\in V, r\in S_i\subseteq F\}$. We call $G=(V,F,E)$ the factor graph thus composed and can then rewrite eq. (\ref{deffact}) as
\begin{equation}
\sum_{i\ind r} n_i \leq 1 \qquad \forall r\in F
\label{constr}
\end{equation}
A SP is a configuration $\{n_i\}$ satisfying eq. (\ref{constr}) and its relative size is $\rho(\{n_i\})=\frac{1}{N}\sum_{i\in V} n_i$ that is simply the fraction of occupied sites.

It is now easy to define an appropriate Gibbs measure for the MSPs problem on \g\ through the grand canonical partition function 
\begin{equation}
\Xi_G(\mu)=\sum_{\{n_i\}}\ \prod_{i\in V} e^{\mu n_i} \prod_{r\in F}\mathbb{I}\Big(\sum_{i \ind r} n_i \leq 1\Big)\ ,
\label{matchmodel}
\end{equation}
Only SPs contribute to the partition function, and in the \textit{close packing limit}, as we shall call the limit $\mu\uparrow+\infty$, the measure is dominated by MSPs.
Eq. (\ref{matchmodel}) is also a model for a particle gas  with hard core repulsion and chemical potential $\mu$ located on a hypergraph, and as such has been studied mainly on lattice structures and more in general on ordinary graphs. 
Model (\ref{matchmodel}) has been studied on a generalized Bethe lattice (that is the ensemble $\mathbb{G}_{RR}(d,c)$ defined in Section \ref{sec:applications}, a $d$-uniform $c$-regular factor graph) in \cite{Weigt2003,Hansen-Goos2005} as a prototype of a system with finite connectivity showing a glassy behaviour. This has been the only approach, although disguised as a hard spheres model, from the statistical physics community to a general MSP problem. 

The grand canonical potential is defined as 
\begin{equation}
\omega_G(\mu)=-\frac{1}{\mu N} \log\Xi_G(\mu)\ ,
\label{grand-can}
\end{equation}
and the particle density as
\begin{equation}
\rho_G(\mu)=\frac{1}{N}<\sum_{i\in V} n_i>_{G,\mu}=-\omega_G(\mu)-\mu\, \partial_\mu \omega_G(\mu)\ .
\end{equation}
Grand potential and density are related to entropy by the thermodynamic relation 
\begin{equation}
s_G(\mu)=-\mu\big(\omega_G(\mu)+\rho_G(\mu)\big)=\mu^2\partial_\mu \omega_G(\mu).
\label{entropy_def}
\end{equation}
In the close packing limit (i.e. $\mu\uparrow + \infty)$ we recover the MSP problem, since in this limit the Gibbs measure is uniformly concentrated on MSPs and $\rho_G$ gives the MSP relative size. Since entropy remains finite in this limit, from eq. (\ref{entropy_def}) we obtain the MSP	 relative size 
\begin{equation}
\rho_G\equiv\lim_{\mu\to+\infty}\rho_G(\mu)=\lim_{\mu\to+\infty}-\omega_G(\mu)
\label{rhoome}
\end{equation}
In this paper we focus on random instances of the MSP problem. As usual in statistical physics we will assume the number of variables $N$ (the number of subsets $S_i$)  and the number of constrains $M$ to diverge, keeping the ratio $\frac{N}{M}$ finite. We shall refer to this limit as the \textit{thermodynamic limit}. Instances of the MSP problem will be encoded in factor graph ensembles which we assume to be locally tree-like in the thermodynamic limit.

The MSP relative size $\rho_G$ is a self-averaging quantity in the thermodynamic limit and we want to compute its  asymptotic value
\begin{equation}
\rho = \lim_{N\to +\infty} \mathbb{E}_G [\,\rho_G\,]\ ,
\label{ro}
\end{equation}
where we denoted with $\mathbb{E}_G [\ \bullet\ ]$ the expectation over the factor graph ensemble.
In last equation the $N$ dependence is encoded in the graph ensemble considered. Computing eq. \eqref{ro} is not an easy task and some approximation have to be taken. We shall employ the cavity method from the statistical physics of disordered systems \cite{Parisi1987,Montanari2009}, using both the replica symmetric (RS) and the one-step replica symmetry breaking (1RSB) ansatz. We will prove in Section \ref{sec:gks} that the RS ansatz is exact in a certain region of the phase space while in Section \ref{sec:applications} we will give numerical evidence that the 1RSB approximation gives very good results outside the RS region.
\section{Replica symmetry}
\label{sec:rs}

\subsection{Bethe approximation on a single instance}
The RS cavity method has been known for many decades outside the statistical physics community as the Belief Propagation (BP) algorithm and only in recent years the two approaches have been bridged \cite{kschichang01,Montanari2009}. We start with a variational approximation to the grand potential eq. \eqref{grand-can} of an instance of the problem, the Bethe free energy approximation:

\begin{equation}
\begin{aligned}
\omega^{RS}_G[\boldsymbol{\hat{\nu}}]=&\frac{1}{N}\left[
\sum_{r\ind F} \omega_r[\boldsymbol{\hat{\nu}}]
			+\sum_{i\in V}(1-|\partial i|)\,\omega_i[\boldsymbol{\hat{\nu}}]
 \right]\ ,
\end{aligned}
\end{equation}
with the factor and variable contributions given by
\begin{equation}
\begin{aligned}
\omega_r[\boldsymbol{\hat{\nu}}]=&-\frac{1}{\mu}\log\left[\sum_{\boldsymbol{n}_{i\in \partial r}}\mathbb{I}\Big(\sum_{i \ind r} n_i \leq 1\Big)\prod_{j\in \partial r}\prod_{s\jnor}\hat{\nu}\sj(n_j)\right] ,\\
\omega_i[\boldsymbol{\hat{\nu}}] =&-\frac{1}{\mu}\log\left[\sum_{n_i}e^{\mu n_i}\prod_{r\in\partial i}\hat{\nu}_{\ri}(n_i)\right]\ .
\end{aligned}
\end{equation}
The grand canonical potential is expressed as a function of the factor node to variable node messages $\boldsymbol{\hat{\nu}}=\{\hat{\nu}\ri\}$. Minimization of $\omega^{RS}_G[\boldsymbol{\hat{\nu}}]$ over the messages constrained to be normalized to one, yields
the fixed point BP equations for the set packing: 
\begin{equation}
\begin{aligned}
\hat{\nu}\ri(1)=&\frac{1}{Z\ri}\prod_{j\rnoi}\prod_{s\jnor}\hat{\nu}\sj(0)\ ,\\
\hat{\nu}\ri(0)=&\frac{1}{Z\ri}\bigg[\prod_{j\rnoi}\prod_{s\jnor}\hat{\nu}\sj(0)\\
&+e^{\mu}\sum_{j\rnoi}
\prod_{s\jnor}\hat{\nu}\sj(1)\\
&\times\prod_{j'\in \partial r\setminus 
 \{i,j\}}\prod_{s'\in \partial j'\setminus r}\hat{\nu}_{s'\to j'}(0)\bigg]\ .
\label{bpT1}
\end{aligned}
\end{equation}
The coeeficients $Z\ri$ are normalization factor.
Equations (\ref{bpT1}) can be simplified introducing the fields $\{t\ri\}$ defined as 
\begin{equation}
\frac{\hat{\nu}\ri(1)}{\hat{\nu}\ri(0)}= e^{-\mu t\ri}\ ,
\label{deft}
\end{equation}
yielding
\begin{equation}
t\ri=\frac{1}{\mu}\log\left[1+\sum_{j\rnoi}e^{\mu(1-\sum_{s\jnor}t\sj)}\right]\ .
\label{bpT2}
\end{equation}
Since we are interested in the close packing limit  to solve the problem we will straightforwardly apply the zero temperature cavity method \cite{Mezard2002}. The related BP equations which can be found as the $\mu\uparrow \infty$ limit of eq. (\ref{bpT2}) read
\begin{equation}
t\ri =\max\left\{0\right\}\cup\left\{1-\sum_{s\jnor}t\sj\right\}_{j\rnoi}\ .
\label{bpT}
\end{equation}
We note that the messages $\{t\ri\}$ are bounded to take values in the interval $[0,1]$ and that if we set the initial value of each $t\ri$ in the discrete set $\{0,1\}$, at each BP iteration all messages will take value either 0 or 1. These values can be directly interpreted as the occupational loss occurring in the subtree $r\to i$ if the subtree is connected to the occupied node $i$. This loss cannot be negative (thus a gain) since we put an additional constrain on the subtree demanding every $j$ neighbour of $r$ to be empty, and cannot be greater than 1 as well, in fact to $t\ri=1$ corresponds the worst case scenario where an otherwise occupied node $j\rnoi$ has to be emptied.      

The Bethe free energy for model (\ref{matchmodel}) on the factor graph \g\ can be expressed 
as a function of the fixed point messages $\{t\ri\}$ as
\begin{equation}
\begin{aligned}
\omega^{RS}_G(\mu)=&-\frac{1}{\mu N}\big[\sum_{r\ind F} \log\big(1+\sum_{j\ind r}e^{\mu(1-\sum_{s\jnor}t\sj)}\big)\\
 				&+\sum_{i\in V}(1-|\partial i|)\log\big(1+e^{\mu(1-\sum_{r\ind i}t\ri)}\big)\big]\ .
\end{aligned}
\label{wfinite}
\end{equation}  
We finally arrive to the  Bethe estimation of the MSPs relative size, taking the close packing limit of  eq. ($\ref{wfinite}$) and using $\omega^{RS}_G=-\rho^{RS}_G$, given by
\begin{equation}
\begin{aligned}
\rho^{RS}_G=\frac{1}{N}\bigg[&\, \sum_{r\in F}\max\{0\}\cup\{1-\sum_{s\jnor}t\sj\}_{j\in r} \\
&+\sum_{i\in V}(1-|\partial i|)\max\{0,1-\sum_{r\in i}t\ri\}\bigg]\ .
\label{wnonrand}
\end{aligned}
\end{equation}
Let us examine the various contribution to eq. (\ref{wnonrand}) since we want to convince ourselves that it exactly counts the MSP size, at least on tree factor graphs.
The term
$(1-|\partial i|)\max\{0,1-\sum_{r\in i}t\ri\}$ contributes with $1-|\partial i|$ to the sum only if all  the incoming $t$ messages are zero. In this case $n_i$ is frozen to 1, i.e. the variable $i$ takes part to all the MSP in $G$. Obviously all the neighbours of a variable frozen to 1 have to be frozen to 0.
To all its $ |\partial i|$ neighbours $r$, the frozen to 1 variable $i$ sends a message  $1-\sum_{s\in \partial i \setminus r}t_{s\to i}=1$, so that we have $|\partial i|$ contributions in the first sum of eq. \eqref{wnonrand} $\max\{0\}\cup\{1-\sum_{s\jnor}t\sj\}_{j\in r}=1$ and the total contribution from $i$ correctly sums up to 1.
If for a certain $i$ we have a total field $\tau_i\equiv 1-\sum_{r\in i}t\ri<0$ (two or more incoming messages are equal to one) the variable is frozen to 0, that is it does not take part to any MSPs. It correctly does not contribute to $\omega^{RS}_G$ since it sends a message $1-\sum_{s\in i \setminus r}t_{s\to i}\leq 0$ to each neighbour $r$,  thus it is not computed in $\max\{0\}\cup\{1-\sum_{s\jnor}t\sj\}_{j\in r}$. 

The third case is the most interesting. It concerns variables $i$ which take part  to a fraction of the MSPs. We shall call them unfrozen variables. The total field on an unfrozen variable $i$ is $\tau_i=0$ (thus we have no contribution from the second sum in eq. (\ref{wnonrand})) and all incoming messages are 0 except for a single $t\ri=1$. To this sole function node $r$ the node $i$ sends a message 1, so that the contribution of $r$ to the first sum is $1$. Actually BP equations impose that $r$ has to have at least another unfrozen neighbour beside $i$. In other terms the function node $r$ says: whatever MSP we consider, one of my neighbours has to be occupied. The corresponding term 
$\max\{0\}\cup\{1-\sum_{s\jnor}t\sj\}_{j\in r}=1$ in eq. \eqref{wnonrand} accounts for that.

The presence of unfrozen variables is the reason why we cannot express the density $\rho_G$ through the formula
\begin{equation}
\rho_G=<\sum_{i\in V}n_i>\ .
\end{equation}
In fact, using the infinite chemical potential formalism,  we cannot compute
\begin{equation}
<n_i>=\lim_{\mu\to +\infty}\frac{e^{\mu(1-\sum_{r\in i}t\ri)}}{1+e^{\mu(1-\sum_{r\in i}t\ri)}}
\label{nini}
\end{equation}
when $\lim_{\mu\to +\infty}1-\sum_{r\in i}t\ri=0$, and we would have to use the $O\Big(\frac{1}{\mu}\Big)$ corrections to the fields $\{t\ri\}$. We bypass the problem using the grand potential $\omega^{RS}_G$ to obtain $\rho^{RS}_G$, also addressing a problem reported in \cite{Gamarnik2008} of extending an analysis suited for weighted matchings and independent sets to the unweighted case.

\subsection{Ensemble averages}
To proceed further in the analysis and since one is often concerned with the average properties of a class of related factor graphs, let us consider the case where the factor graph \g\ is sampled from a locally tree-like factor graph ensemble $\mathbb{G}(N)$.
We shall employ the following notation for the graph ensembles expectations: $\mathbb{E}_{G}[\ \bullet\ ]$ for graphs averages;
$\mathbb{E}_{C_0}[\ \bullet\ ]$ ($\mathbb{E}_{D_0}[\ \bullet\ ]$) for expectations over the factor (variable) degree distribution, which we will sometime call root degree distribution; $\mathbb{E}_{C}[\ \bullet\ ]$ ($\mathbb{E}_{D}[\ \bullet\ ]$) for expectations over the excess degree distribution of factor (variable) nodes conditioned to have at least one adjacent edge, which we will sometime call residual degree distribution. 
The quantities $c$ and $d$ and the random variables $C, C_0, D$ and $D_0$,  are related by 
 
\begin{equation}
\begin{aligned}
\mathbb{P}[C=k] &= \frac{(k+1)\,\mathbb{P}[C_0=k+1]}{c}\ ,\\
\mathbb{P}[D=k] &= \frac{(k+1)\,\mathbb{P}[D_0=k+1]}{d}\ .
\end{aligned}
\end{equation} 
In Section \ref{sec:applications} we  discuss some specific factor graph ensembles, where $C$ and $D$ are fixed to a deterministic value or Poissonian distributed.

With this definitions the distributional equation corresponding to Belief Propagation formula eq. \eqref{bpT} reads
\begin{equation}
T'\deq \max\{0\}\cup\left\{1-\sum_{s=1}^{D_j}T_{sj}\right\}_{j\in \{1,\ldots,C\}}\ ,
\label{deT}
\end{equation}
where $\{D_j\}$ are i.i.d. random residual variable degrees, $C$ is a random residual factor degree and $\{T_{sj}\}$ are i.i.d. random incoming messages. 
Since on a given graph each message $t\ri$ takes value only on $\{0,1\}$ we can take the 
distribution of messages $t$ to be of the form
\begin{equation}
P(t)=p\ \delta(t)+(1-p)\ \delta(t-1)
\label{fpTdef}
\end{equation}
For this to be a fixed point of eq. (\ref{deT}) the parameter $p$ has to satisfy the self-consistent equation 
\begin{equation}
p_*=\mathbb{E}_{C}\left(1-\mathbb{E}_{D}\, p_*^D\right)^C.
\label{fpTp}
\end{equation}

Using (\ref{wnonrand}) and (\ref{fpTp}) we obtain the replica symmetric approximation to the asymptotic MSP relative size
\begin{equation}
\rho^{\small{RS}}=\frac{d}{c}\Big(1-\mathbb{E}_{C_0}\big(1-\mathbb{E}_{D}\,p_*^D\big)^{C_0}\Big)+\mathbb{E}_{D_0} (1-D_0)p^{D_0}_*.
\label{rho}
\end{equation}
It turns out that the RS approximation is exact only when the ratio $\frac{N}{M}$ is sufficiently low. In the SP language eq. (\ref{rho}) holds true only when the number of the subsets among which we choose our MSP is not too big compared to the number of elements of which they are composed.

In the next section we will quantitatively establish the limits of validity of the RS ansatz and introduce the one-step replica symmetry breaking formalism which provides a better approximation to the exact results in the regime of large $\frac{N}{M}$ ratio. In Section \ref{sec:gks} we  prove that eq. \eqref{rho} is exact for certain choices of the factor graph ensembles.

\section{Replica symmetry breaking}
\label{sec:rsb}
\subsection{RS consistency and bugs propagation}
Here we propose two criterions in order to check the consistency of the RS cavity method. If any of those fails 

The first criterion is the assumption of unicity of the fixed point of eq. (\ref{fpTp}) and its dynamical stability under iteration. We restrict ourselves to the subspace 
of distributions with support on $\{0,1\}$, although it is possible to extend the following analysis to the whole space of distributions over $[0,1]$ with an argument based on stochastic dominance  following \cite{Gamarnik2008}. Characterizing the distributions over $\{0,1\}$ as in eq. (\ref{fpTdef}) with a real parameter $p\in[0,1]$, from eq. (\ref{deT}) we obtain the dynamical system
\begin{equation}
p'=\mathbb{E}_{C}\left(1-\mathbb{E}_{D}\, p^D\right)^C\equiv f(p)\ .
\end{equation}
The stability criterion
\begin{equation}
|f'(p_*)|<1
\label{tstab2}
\end{equation}
suggests that the RS approximation to the MSP size eq. (\ref{rho}) is exact as long as eq. (\ref{tstab2}) is satisfied. This statement will be made rigorous in Section \ref{sec:gks}.

The second method we use to check the RS stability, called \textit{bugs proliferation}, is the zero temperature analogous of spin glass susceptibility. We will compute the average number of changing $t$ messages induced by a  change in a single message $t\ri$ ($1\to 0$ or $0\to 1$). This is given by
\begin{equation}
N_{ch}=\mathbb{E}\Big[\sum_{(s,j)}\ \sum_{\substack{
            a_0,a_1\\
            b_0,b_1}}\mathbb{I}\big(t\sj=b_0\to b_1\ |\ t\ri=a_0\to a_1\big)\Big]\ ,
\end{equation}
where $a_0,a_1,b_0,b_1$ take value in $\{0,1\}$. Since a random factor graph is locally a tree, and assuming correlations decay fast enough, last equation can be expressed as
\begin{equation}
N_{ch}=\sum_{s=0}^{+\infty}(\overline{C}\ \overline{D})^s\sum_{\substack{
            a_0,a_1\\
            b_0,b_1}} P(t_s=b_0\to b_1\ |\ t_o=a_0\to a_1\big)\ ,
\end{equation}
where $t_s$ is a message at distance $s$ from the tree root $o$, and we defined the average residual degrees $\overline{C}=\mathbb{E}_C[C]$ and $\overline{D}=\mathbb{E}_D[D]$. The stability condition  $N_{ch}<+\infty$ yields a constraint on the greatest eigenvalue $\lambda_M$ of the transfer matrix $P(b_0\to b_1\ |a_0\to a_1\big)$: 
 
\begin{equation}
\overline{C}\ \overline{D}\ \lambda_M <1.
\label{bug}
\end{equation}

The two methods presented above give equivalent conditions for the RS ansatz to hold true and they simply express the independence for a finite subgraph from the tail boundary conditions. 

\subsection{The 1RSB formalism}
\label{1RSB}We are going to develop the 1RSB formalism for the MSP problem and then apply it in Paragraph \ref{grp} to the ensemble $\mathbb{G}_{RP}$. We will not check the coherence of the 1RSB ansatz through the inter-state and intra-state susceptibilities \cite{Montanari2007}, we are then not guaranteed against the need of further steps of replica symmetry breaking in order to recover the exact solution. Even in the worst case scenario though, when the 1RSB solution is trivially exact only in the RS region and  a fullRSB ansatz is needed otherwise, the 1RSB prediction for 
MSP relative size $\rho$ should be everywhere more accurate than the RS one and possibly very close to the real value.
We shall refer to the textbook of M\'ezard and  Montanari \cite{Montanari2009} for a detailed exposition of the 1RSB cavity method.

Let us fix a factor graph $G$ from a locally tree-like ensemble $\mathbb{G}$. We  call $Q\ri(t\ri)$ the distribution of messages on the directed edge $r\to i$ over the states of the system. We still expect the messages $t\ri$  to take values 0 or 1, so that $Q\ri$ can be parametrized as  
\begin{equation}
Q_{\ri}(t\ri)=q\ri \delta(t\ri)+(1-q\ri)\delta(t\ri-1).
\end{equation}
The 1RSB Parisi parameter $x\in[0,1]$  has to be properly rescaled in order to correctly take the limit $\mu\uparrow\infty$. Therefore we introduce the new 1RSB parameter $y=\mu x$ which stays finite in the close packing limit and takes value in $[0,+\infty)$. 
The reweighting factor $e^{-y \omega^{r\to i}_{iter}}$ is defined as 
\begin{equation}
e^{-y \omega^{r\to i}_{iter}}=\frac{Z_{r\to i}}{\prod_{j\rnoi}\prod_{s\jnor}Z_{s\to j}}.
\end{equation}
Last equation combined with eqs. \eqref{bpT1} and (\ref{deft}) gives  $\omega^{r\to i}_{iter}=-t\ri$. Averaging over the whole ensemble we can then write the zero temperature 1RSB message passing rules (also called Survey Propagation equations):

\begin{equation}
q'\deq \frac{\prod_{j=1}^C(1-\prod_{s=1}^{D_j}q_{sj})}{e^y+(1-e^y) \prod_{j=1}^C(1-\prod_{s=1}^{D_j}q_{sj})},
\label{1rsb1}.
\end{equation}
In preceding equation $\{q_{sj}\}$ are i.i.d.r.v. on $[0,1]$ and, as usual, $C$ is the random variable residual degree and $\{D_j\}$ are random independent factors residual degrees. Fixed points of eq. (\ref{1rsb1}) take the form 
\begin{equation}
P(q)=p_0\delta(q)+p_1\delta(q-1)+p_2 P_2(q),
\end{equation}
where $P_2(q)$ is a continuous distribution on $[0,1]$ and $p_2 = 1 -p_0-p_1$. Parameters $p_0$ and $p_1$ have to satisfy the closed equations
\begin{equation}
\begin{aligned}
p_1=&\mathbb{E}_{C}\big(1-\mathbb{E}_{D}(1-p_0)^D\big)^C, \\
p_0=&1-\mathbb{E}_{C}\big(1-\mathbb{E}_{D}\,p_1^D\big)^C.
\end{aligned}
\label{p1RSB}
\end{equation}
Solutions of (\ref{p1RSB}) with $p_2=0$ correspond  to replica symmetric solutions and their instability marks the onset of a spin glass phase. In this new phase the MSPs are clustered according to the general scenario displayed by constraint satisfaction problems \cite{Krzakala2007}.

From the stable fixed point of equation (\ref{1rsb1}) we can calculate the 1RSB free energy functional $\phi(y)$  as

\begin{equation}
\begin{aligned}
-y\phi(y)=&\frac{d}{c}\ \mathbb{E}\,{\log\Big[(1-e^y)\prod^{C_0}_{j=1}(1-\prod_{s=1}^{D_j}q_{sj})+e^y\Big]}+\\
&+\mathbb{E}\,{(1-D_0)\log\Big[(1-e^y)(1-\prod_{r=1}^{D}q_r)+e^y\Big]},
\label{free1RSB}
\end{aligned}
\end{equation}
and 1RSB density, $\rho_{\text{{\tiny 1RSB}}}(y)=-\frac{\partial y\phi(y)}{\partial y}$, as
\begin{equation}
\begin{aligned}
\rho_{\text{{\tiny 1RSB}}}(y)=&\frac{d}{c}\ \mathbb{E}\,{\frac{e^y\left(1-\prod^{C_0}_{j=1}(1-\prod_{s=1}^{D_j}q_{sj})\right)}{(1-e^y)\prod^{C_0}_{j=1}(1-\prod_{s=1}^{D_j}q_{sj})+e^y}}+\\
&+\mathbb{E}\, {(1-D_0)\frac{e^y\prod_{r=1}^{D_0}q_r}{(1-e^y)(1-\prod_{r=1}^{D_0}q_r)+e^y}},
\label{rho1RSB}
\end{aligned}
\end{equation}
with expectations intended over $\mathbb{G}$ and over fixed point messages $\{q_s\}$ and $\{q_{sj}\}$. Since the free energy functional $\phi(y)$ and the complexity  $\Sigma(\rho)$ are related 
by the Legendre transform 
\begin{equation}
\Sigma(\rho)=-y\rho - y\phi(y),
\label{therm1rsb}
\end{equation}
with $\frac{\partial \Sigma}{\partial \rho}=-y$, 
through (\ref{free1RSB}) we can compute the complexity taking the inverse transform. Equilibrium states, that is MSPs, are selected by
\begin{equation}
\rho_{\text{\tiny{1RSB}}}=\argmax_{\rho} \{\rho :\Sigma(\rho)\geq 0\}.
\end{equation}
or equivalently taking the 1RSB parameter $y$ to be 
\begin{equation}
y_s=\argmax_{y\in [0,+\infty]} \phi(y).
\end{equation}
In the static 1RSB phase we expect $\Sigma(\rho_s)=0$ so that from (\ref{therm1rsb}) we have 
$\phi(y_s)=-\rho_s$. We will see that this is generally true except for the ensemble $\mathbb{G}_{RP}(2,c)$ of Section \ref{sec:applications} , corresponding to maximum matchings on ordinary graphs, where the equilibrium state have maximal complexity and $y_s=+\infty$.  The relation 
     
\begin{equation}
\rho_{\text{{\tiny 1RSB}}}=-\phi(y_s)
\end{equation}
 is always valid though, since for $y_s\uparrow\infty$ complexity stays finite.

\section{A heuristic algorithm and exact results}
\label{sec:gks}
In this section we propose a heuristic greedy algorithm to address the problem of MSP. It is a natural generalization of the algorithm that Karp and Sipser proposed to solve the maximum matching problem on Erd\H{o}s-R\'enyi random graphs \cite{Karp1981a}, therefore we shall call it Generalized Karp-Sipser (GKS).   
Extending their derivation concerning the leaf removal part of the algorithm we are able to prove that the RS prediction for MSP density is exact as long as the stability criterion 
\eqref{tstab2} is satisfied. We will not give the proofs of the following theorems as they are lengthy but effortless extension of those given in \cite{Karp1981a}. 
In order to find the maximum matching on a graph Karp and Sipser noticed that as long as the graph contains a node of degree one (a leaf),  its unique edge has to belong to one of the perfect matchings. 

They considered the simplest randomized algorithm one can imagine: as long as there is any leaf remove it from the graph, otherwise remove a random edge; then iterate until the graph is depleted. They studied the behaviour of this leaf-removal algorithm on random graphs and were able to prove that it grants w.h.p a maximum matching (within an $o(n)$ error). 

To generalize some of their results we need to extend the definition of leaf to that of pendant. 
We call \textit{pendant} a variable node whose factor neighbours all have degree one, except for one at most. Stating the same concept in different words, all  of the neighbours of a pendant  have the pendant itself as their sole neighbour, except for one of them at most. See Figure \ref{img:leafremoval} for a pictorial representation of a pendant (in red).
The GKS algorithm is articulated in two phases: a pendant removal and a random occupation phase.
We give the pseudo-code for the Generalized Karp-Sipser algorithm:
\begin{algorithm}[H]
\caption{Generalized Karp-Sipser (GKS)}
\label{alg:gkarp}
\begin{algorithmic}
\Require a factor graph $G=(V,F,E)$
\Ensure a set packing $V'$
    \State $V'=\emptyset$
    \State add to $V'$ all isolated variable nodes and remove them
    \State from $G$
    \State remove from $G$ any isolated factor node
\While{$V$ is not empty}
	\If{$G$ has any pendant}
		\State choose a pendant $i$ uniformly at random
		\State add $i$ to $V'$
		\State remove 	$i$ from $G$, then remove its factor neighbours 
		\State and their  variable neighbours
	\Else
		\State pick uniformly at random a variable node and  add
		\State  it to $V'$, remove it from $G$, then remove its factor
		\State neighbours and their variable neighbours
	\EndIf
	\State remove from $G$ any isolated factor node
\EndWhile\\
\Return $V'$
\end{algorithmic}
\end{algorithm}
At each step the algorithm prioritize the removal of pendants over that of random variable nodes.
We notice that the removal of a pendant is always an optimal choice in order to achieve a MSP, we have no guarantees though on the effect of the occupation of a random node.
We call $\pha$ the execution of the algorithm up to the point where the first non-pendant variable is added to $V'$. It is trivial to show that $\pha$ is enough to find a MSP on a tree factor graph.  The interesting thing though is that $\pha$ is also able to deplete non-tree factor graphs and find a MSP as long as the factor graphs are sufficiently sparse and large enough.
\begin{figure}
\centering
\includegraphics[width=0.5\textwidth, trim= 50 400 50 200]{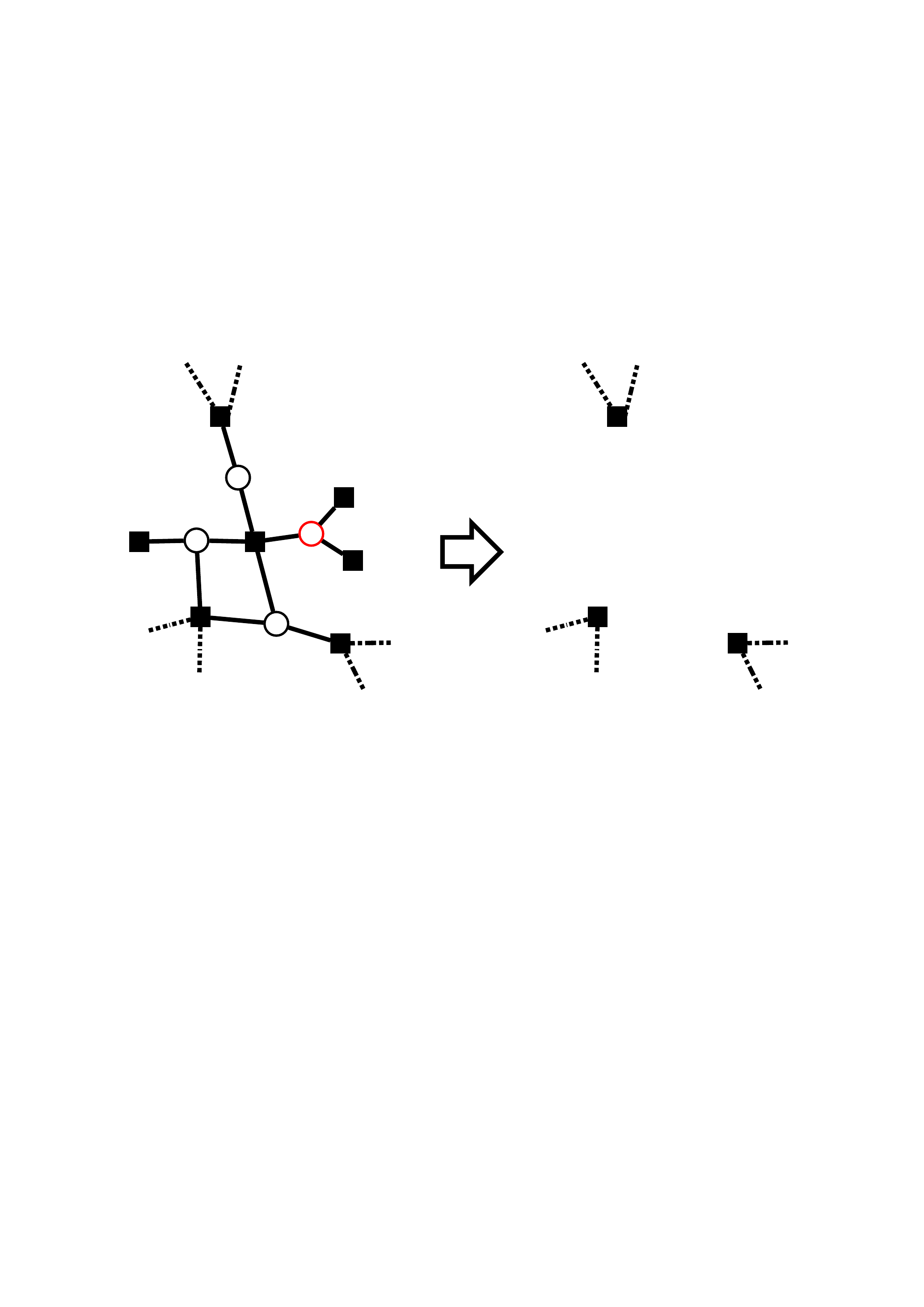}
\caption{The removal of a pendant (depicted in red) as occurs in the inner part of the while loop in GKS algorithm.}
\label{img:leafremoval}
\end{figure}

We call \textit{core} the subset of the variable nodes which has not been assigned to the MSP in   $\pha$. We will show how the emergence of a core is directly related to replica symmetry breaking.  Let us define as usual 
\begin{equation}
f(x)\equiv \mathbb{E}_C\, \left(1-\mathbb{E}_D\, x^D\right)^C.
\label{def-f}
\end{equation}
The function $f$ is continuous, non-increasing and satisfies the relation $1\geq f(0)\geq f(1)= \mathbb{P}\left[C=0\right]$.
It turns out that, in the large graph limit, $\pha$ of GKS  is characterized by the solutions of the system of equations
\begin{equation}
\begin{cases}
p = f(r)\ ,\\
r = f(p)\ .
\end{cases}
\label{rp}
\end{equation}
We notice that system \eqref{rp} is equivalent to 1RSB eqs. \eqref{p1RSB} once the substitution $p\to p_0$ and  $r\to 1 - p_1$ are made.
\begin{lemma}
The system of eqs. \eqref{rp} admits always a (unique) solution $p_*=r_*$.
\end{lemma}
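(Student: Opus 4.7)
The plan is to reduce the two-variable system \eqref{rp} to a one-variable fixed-point problem on the diagonal and then invoke the intermediate value theorem together with strict monotonicity. Any pair $(p_*,r_*)$ with $p_*=r_*$ satisfying \eqref{rp} is, by inspection, a fixed point of $f$; conversely, every fixed point of $f$ yields such a diagonal solution. Thus the claim reduces to: $f$ has a unique fixed point on $[0,1]$.

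First I would introduce the auxiliary function $g(x) = f(x) - x$ on $[0,1]$. The function $f$ is continuous on $[0,1]$ (as recorded just before the lemma), so $g$ is continuous. Because $f$ is non-increasing and $x\mapsto -x$ is strictly decreasing, $g$ is strictly decreasing. In particular $g$ has at most one zero in $[0,1]$, which settles uniqueness.

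For existence I would evaluate the boundary values. At the left endpoint, $g(0) = f(0) \geq 0$ because $f$ takes values in $[0,1]$. At the right endpoint, $g(1) = f(1) - 1 = \mathbb{P}[C=0] - 1 \leq 0$ using the inequality $1 \geq f(0) \geq f(1) = \mathbb{P}[C=0]$ stated before the lemma. The intermediate value theorem then produces $p_* \in [0,1]$ with $g(p_*)=0$, i.e.\ $f(p_*)=p_*$, and setting $r_* := p_*$ gives the desired diagonal solution.

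The argument is essentially routine and I do not foresee a real obstacle; the only point worth flagging is what the lemma does \emph{not} say. Since $f\circ f$ is non-decreasing (being the composition of two non-increasing maps), the full system \eqref{rp} may in principle admit off-diagonal solutions corresponding to 2-cycles of $f$; ruling these out would require extra information such as the contractivity condition \eqref{tstab2}, but the present statement only concerns the diagonal solution, for which monotonicity and the intermediate value theorem suffice.
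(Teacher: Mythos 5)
Your proof is correct and follows essentially the same route as the paper, which likewise defines $g(p)=f(p)-p$ and concludes by continuity and monotonicity that it has a single zero in $[0,1]$. Your version merely spells out the boundary values for the intermediate value theorem and correctly notes that the lemma says nothing about off-diagonal solutions, which the paper addresses separately.
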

\begin{proof}
By monotony and continuity the function $g(p)=f(p)-p$ has a single zero in the interval $[0,1]$.
\end{proof}
If other solutions are present the relevant one is the one with smallest $p$, as the the following theorems certify.
\begin{theorem}
Let $(p,r)$ be the solution with smallest $p$ of eq. \eqref{rp}. Then the 
density of factor nodes surviving $\pha$ in the large graphs limit is given by
\begin{equation}
\psi_1 = 1+\tilde{p}-\tilde{r}+c\, p\, \mathbb{E}_D\left[p^D-r^D\right]
\end{equation}
with
\begin{equation}
\tilde{r}=\mathbb{E}_{C_0}\left(1-\mathbb{E}_D\, p^D\right)^{C_0}
\end{equation}
and
\begin{equation}
\tilde{p}=\mathbb{E}_{C_0}\left(1-\mathbb{E}_D\, r^D\right)^{C_0}.
\end{equation}
In particular if the smallest solution is $p_*=r_*$ the graph is depleted with high probability in $\pha$.
\label{t1}
\end{theorem}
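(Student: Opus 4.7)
The plan is to cast phase 1 of the GKS algorithm as a random discrete-time process on the factor graph and analyse it via Wormald's differential equation method, directly extending the argument of Karp and Sipser. I would track a small set of macroscopic observables whose evolution, in the large-$N$ limit, concentrates on a deterministic ODE trajectory whose fixed points coincide with the solutions of the system \eqref{rp}.

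The natural observables are the fractions of ``forced'' message directions of the two types: let $p(\tau)$ denote the fraction of factor-to-variable directions $r\to i$ that carry the BP message $t\ri=0$ after a fraction $\tau$ of pendants have been processed, and let $r(\tau)$ be the analogous fraction on variable-to-factor directions. Using the locally tree-like structure together with the size-biased residual laws of $C$ and $D$, the expected one-step increments of $p$ and $r$ become explicit functions of $p$, $r$ and the laws of $C,D$; after an appropriate reparametrization of time the limiting ODE system is driven by the map $(p,r)\mapsto(f(r),f(p))$ of \eqref{def-f}. Starting from the initial datum $(1,1)$, the trajectory is coordinate-wise non-increasing, so the first instant at which both coordinates become stationary is exactly the solution of \eqref{rp} with smallest $p$; at that instant phase 1 halts, either by depleting the graph or by uncovering a residual core.

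Once the stopping values $(p,r)$ are identified, $\psi_1$ follows from a bookkeeping over factor nodes. A root factor node of degree $C_0$ survives phase 1 if and only if all of its adjacent variable subtrees have settled on the $0$-message side and the factor itself has not been absorbed together with a pendant; the two survival probabilities $\tilde r=\mathbb{E}_{C_0}(1-\mathbb{E}_D p^D)^{C_0}$ and $\tilde p=\mathbb{E}_{C_0}(1-\mathbb{E}_D r^D)^{C_0}$ count the two symmetry-broken populations. The correction $c\,p\,\mathbb{E}_D[p^D-r^D]$ adjusts for factor nodes lost as neighbours of pendants --- one per ``asymmetric'' edge joining the $p$- and $r$-marginals --- and combining the three contributions yields the stated formula. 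When the smallest solution satisfies $p_*=r_*$ one has $\tilde p=\tilde r$ and $\mathbb{E}_D[p_*^D-r_*^D]=0$, so $\psi_1=0$, i.e.\ the graph is depleted w.h.p., as claimed.

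The main obstacle, as in the original Karp--Sipser proof, is upgrading the heuristic ODE derivation to a genuine concentration statement. The full state of the process is the infinite-dimensional joint degree sequence, and one must show that $(p(\tau),r(\tau))$ together with the prescribed degree invariants is a sufficient statistic up to $o(N)$ fluctuations. This rests on a careful bound on the rate at which a single pendant removal creates or destroys further pendants --- significantly more delicate than in the graph case, because removing one pendant can trigger a cascade involving every variable adjacent to each of its size-$d$ factor neighbours --- together with uniform control of these fluctuations up to the hitting time of the ODE fixed point, where the pendant-creation rate vanishes and concentration is hardest to maintain.
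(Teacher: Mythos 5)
The paper itself does not prove Theorem \ref{t1}: it explicitly defers to a (non-trivial) extension of the derivation in Karp and Sipser's article, so your general route --- casting phase 1 as a random peeling process, extracting a deterministic limit by the differential-equation method, and reading the core density off the limiting fixed point --- is indeed the intended one, and you correctly identify the concentration step as the hard technical part. However, the step on which the whole theorem hinges, namely the selection of the solution of \eqref{rp} with \emph{smallest} $p$, is broken as you describe it. The map $(p,r)\mapsto(f(r),f(p))$ commutes with the swap $(p,r)\mapsto(r,p)$, so the diagonal $\{p=r\}$ is invariant both under the discrete iteration and under any ODE flow built from this map. Starting from the symmetric datum $(1,1)$ the trajectory therefore never leaves the diagonal and can only converge to the symmetric solution $p_*=r_*$ (which is unique by Lemma 1); your dynamics would thus predict an empty core for \emph{all} parameters, which is exactly what the theorem is supposed to rule out beyond the RS phase. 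The asymmetric solution is selected by the \emph{alternating} depth recursion $x^{(k+1)}=f(x^{(k)})$ with the asymmetric boundary condition ``nothing is forced yet'', i.e.\ $(p^{(0)},r^{(0)})=(0,1)$ in the coordinates of \eqref{rp}: since $f$ is non-increasing, $f^{2}$ is non-decreasing, so $f^{2k}(0)\uparrow p$ (the least fixed point of $f^{2}$) while $f^{2k+1}(0)\downarrow r=f(p)$ (the greatest). Your claim that the trajectory from $(1,1)$ is ``coordinate-wise non-increasing'' is therefore wrong on both counts: the initial condition must be asymmetric, and the two coordinates move in opposite directions.

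Two further points. First, the formula for $\psi_1$ is asserted rather than derived: saying that $c\,p\,\mathbb{E}_D\left[p^D-r^D\right]$ ``adjusts for factor nodes lost as neighbours of pendants'' is not a computation, and since this formula is the entire quantitative content of the theorem, a proof must actually carry out the bookkeeping over factor nodes removed together with a pendant, factor nodes removed because they became isolated, and factor nodes surviving. Second, your concluding consistency check fails: substituting $p=r$ into the displayed expression gives $\psi_1=1+0+0=1$, not $0$ as you state, so either you did not evaluate the formula or you are silently correcting a sign that the derivation itself would have to pin down. A genuine derivation of $\psi_1$ would have forced this issue into the open; as it stands, the proposal neither establishes the formula nor verifies that it is compatible with the depletion claim in the last sentence of the theorem.
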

As already said we will not give the proof of this and of the following theorem, as they are lengthy and can be obtained from the derivation given in Karp and Sipser's article\cite{Karp1981a} with little effort even if not in a completely trivial way.
Theorem \ref{t1} affirms that as soon as eq. \eqref{rp} develops another solution a core survives $\pha$. This phenomena coincides with the need for replica symmetry breaking in the cavity formalism. Let us now establish the exactness the cavity prediction for $\rho$ in the RS phase. 

\begin{theorem}
Let $(p,r)$ be the solution with smallest $p$ of eq. \eqref{rp}. Then the 
density of variables assigned in $\pha$ in the large graphs limit is 
\begin{equation}
\rho_1 = \frac{d}{c}\left(1-\tilde{r}\right)+\mathbb{E}_{D_0} (1-D_0) \, p^{D_0},
\end{equation}
where $\tilde{p}$ has been defined in previous theorem. In particular if the smallest solution is $p_*=r_*$ the replica symmetric cavity method prediction eq. \eqref{rho} is exact.
\label{t2}
\end{theorem}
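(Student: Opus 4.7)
My plan is to follow Karp and Sipser \cite{Karp1981a} closely and translate $\pha$ into a local operation on the Galton–Watson tree that arises as the local weak limit of $G$. Pick a uniformly chosen variable $o$; by the locally tree-like hypothesis its finite-depth neighbourhood converges in distribution to a GW tree with root offspring $D_0$ and residual offsprings $C$ (for factors) and $D$ (for variables). Because each iteration of the pendant-removal cascade propagates only a constant number of levels, whether $o$ is assigned to $V'$ in $\pha$ is, w.h.p., a function of a finite neighbourhood, so it suffices to analyse the cascade on the random tree and then lift the result to $G$ via standard concentration estimates of the kind used in Karp–Sipser.

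On the tree the cascade admits a bottom-up recursion along directed edges. For each $r\to i$ I would introduce two indicator variables encoding the outcome of $\pha$ on the subtree hanging from $r\to i$: the event $\mathcal{A}_{r\to i}$ that $r$ has already been consumed via some $j\in\partial r\setminus i$, and the complementary event $\mathcal{B}_{r\to i}$ that $r$ survives with $i$ still adjacent. Denoting the corresponding probabilities by $r$ and $p$ and exploiting the independence of children in a GW tree, one directly obtains the coupled fixed-point equations \eqref{rp}, the factor $f$ arising exactly as in \eqref{def-f} from the conditional product over $C$ siblings and their $D$ sub-children.

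The main technical hurdle, and the step I expect to be the hardest, is to justify that $\pha$ selects the smallest-$p$ fixed point. Iterating the pair map $(p,r)\mapsto(f(r),f(p))$ from the natural boundary condition corresponding to a tree truncated at finite depth converges, by monotonicity and continuity of $f$, to the smallest-$p$ solution of \eqref{rp}. To transfer this monotone-limit selection to the actual finite graph I would invoke a Wormald-style differential-equations argument, showing that the trajectory of the algorithm's degree statistics stays uniformly close to the monotone-limit trajectory throughout $\pha$; this is the part that in \cite{Karp1981a} requires the most careful bookkeeping, and the generalization to hypergraphs demands tracking a richer state (joint degree of pendants and of their factor-neighbours' partners) but no conceptually new idea.

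Once $(p,r)$ has been identified, the density $\rho_1$ is assembled additively from two local contributions: a root variable $o$ contributes via the event that each of its $D_0$ adjacent factors is non-invasive, producing the term $\mathbb{E}_{D_0}(1-D_0)p^{D_0}$, while the dual counting at a root factor contributes $\frac{d}{c}(1-\tilde{r})$ with $\tilde{r}$ as in Theorem \ref{t1}. For the second claim, when the smallest solution of \eqref{rp} coincides with $p=r=p_*$, which is precisely the RS fixed point of \eqref{fpTp}, Theorem \ref{t1} gives $\psi_1=0$ w.h.p., so $\pha$ depletes $G$ and returns a genuine MSP since every pendant removal is an optimal local move. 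Substituting $p=r=p_*$ into the $\rho_1$ formula reduces $\tilde{r}$ to $\mathbb{E}_{C_0}(1-\mathbb{E}_D\,p_*^D)^{C_0}$, which coincides term by term with \eqref{rho}, establishing the exactness of the RS prediction throughout the RS region.
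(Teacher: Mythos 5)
The paper never writes out a proof of Theorem~\ref{t2}: it explicitly defers to an extension of Karp and Sipser's derivation, and your overall route --- local weak convergence to the Galton--Watson tree, a two-type edge recursion producing \eqref{rp}, selection of the smallest-$p$ solution, Bethe-type assembly of $\rho_1$, and the endgame (pendant removals are individually optimal, $\psi_1=0$ forces depletion, and $p=r=p_*$ collapses $\tilde r$ onto \eqref{rho}) --- is exactly the intended one. The last paragraph of your plan is correct as written.

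There is, however, one step you present as routine that is false as stated, and it sits precisely where the theorem has content. The claim that ``whether $o$ is assigned to $V'$ in $\pha$ is, w.h.p., a function of a finite neighbourhood'' does not hold when the cascade fails to contract: already for $d=2$ the leaf-removal messages alternate along an induced path, so the outcome at a vertex can depend on the parity of its distance to the far end of a path of unbounded length; and in the regime where the smallest solution of \eqref{rp} differs from $p_*$ (non-empty core) such non-locality affects a $\Theta(1)$ fraction of the graph. What \emph{is} local is the event ``$o$ is assigned within $k$ parallel rounds of pendant removal''; its probability is the $k$-th iterate of $f\circ f$ from the all-unassigned boundary condition, and since $f\circ f$ is non-decreasing these iterates increase to the \emph{smallest} fixed point --- this gives the lower bound on $\rho_1$ essentially for free. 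The matching upper bound (that only $o(N)$ additional variables get assigned after $\omega(1)$ rounds) is the part that genuinely requires the Karp--Sipser bookkeeping or a Wormald-type trajectory argument; your sketch conflates this with the locality assertion and also leaves the two methods (tree recursion versus ODEs) stitched together redundantly rather than choosing one. Finally, the passage from the fixed point $(p,r)$ to $\rho_1=\frac{d}{c}\left(1-\tilde r\right)+\mathbb{E}_{D_0}(1-D_0)\,p^{D_0}$ is asserted rather than derived: the weight $(1-D_0)$ is an inclusion--exclusion correction for variables counted by several consumed factors, not the probability of the event you describe, so you still owe a counting argument identifying the algorithm's output with this Bethe-type expression (a useful sanity check is that for $\mathbb{G}_{RP}(2,c)$ it reproduces the known Karp--Sipser matching density).
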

These two theorems imply that the RS prediction for $\rho$, eq. \eqref{rho}, holds true as long as $\pha$ manages to deplete w.h.p. the whole factor graph. A similar behaviour has been observed in other combinatorial optimization problem, e.g. the random XORSAT \cite{Mezard2003}.
Conversely  it is easy to prove, given the equivalence between eqs. \eqref{rp} and \eqref{p1RSB}, that if  a solution with $p\neq p*$ exists the RS fixed point is unstable in the 1RSB distributional space. Therefore the system is in the RS phase if and only if eqs. \eqref{p1RSB} admit a unique solution.

We notice that we could have also looked for the 
solutions of the single equation $p = f^2(p)$ instead of 
the two of eq. \eqref{rp}, since the value of $r$ is uniquely 
determined by the value of $p$. Then previous theorems 
states that the RS results holds as long as $f^2$ has a 
single fixed point. In  
\cite{Gamarnik2008} it has been proven that the RS solution of the weighted 
maximum independent set and maximum weighted matching  
holds true as long as the corresponding squared cavity operator $f^2$ has a unique fixed point. Since those are special cases of the weighted MSP problem, we conjecture that the $f^2$ condition holds for the general case too, as it does for the unweighted case. Probably it would be possible to further generalize the Karp-Sipser algorithm to cover analytically the weighted case as well.

The scenario arising from the analysis of the algorithm and from 1RSB cavity method is the following: in the RS phase maximum set packings form a single cluster and it is possible to connect any two of them with a path involving MSPs separated by a rearrangement of a finite number of variables \cite{Semerjian2008}. In the RSB phase instead MSPs can be grouped into many connected (in the sense we mentioned before) clusters, each one of them defined by the assignation of the variables in the core. Two MSPs who differ on the core are always separated by a global rearrangement of the variables (i. e. $O(N)$). In presence of a core the GKS algorithm makes some suboptimal random choices (after $\pha$).

We did not make an analytical study of the random variable removal part of the GKS algorithm. This has been done by Karp and Sipser for the matching problem, associating to the graph process a set of differential equations amenable to analysis \cite{Wormald1999} (something similar has been done for random XORSAT as well \cite{Mezard2003}). In that case turns out that the algorithm still achieves optimality and yields an almost perfect matching on the core. An appropriate analysis of the second phase of the GKS algorithm, and the reason of its failure for most of the SP ensembles we covered, deserves further studies.

\section{Numerical Simulations}
\label{sec:num}

While the RS cavity eqs. \eqref{fpTp} and \eqref{rho} can be easily computed, the numerical solution of the 1RSB density evolution eq. \eqref{1rsb1} is much more involved (although simplified by the zero temperature limit) and has been obtained with a standard population dynamics algorithm \cite{Me2001}. 

The implementation of the GKS algorithm is straightforward. While it is very fast during $\pha$, we noticed a huge slowing down in the random removal part. We were able to find set packings for factor graphs with hundred of thousands of nodes.

In order to test the cavity and GKS predictions, we also computed the exact MSP size on factor graphs of small size. First we notice that a set packing problem, coded in a factor graph structure $F$, is equivalent to an independent set problem on an appropriate ordinary graph $G$. The node set of $G$ will be the variable node set of $F$ and we add an edge to $G$ between each pair of nodes having a common factor node neighbour in $F$. Therefore each neighborhood of a factor node in $F$ forms a clique in $G$. We then solve the maximum set problem for $G$ using an exact algorithm \cite{Tsukiyama1977} implemented in the \textit{igraph} library \cite{Csardi}. Since the time complexity is exponential in the size of the graph we performed our simulations on graphs containing up to only one hundred nodes.

\section{Applications to problem ensembles}
\label{sec:applications}
We shall now apply the methods developed in the previous sections to some factor graph ensembles, each modelling a class of MSP problem instances. We consider graph ensembles containing nodes with Poissonian random degree or regular degree: $\mathbb{G}_{PR}(N,d,c)$, $\mathbb{G}_{RP}(N,d,c)$, $\mathbb{G}_{PP}(N,d,c),$ $\mathbb{G}_{RR}(N,d,c)$. Subscript $R$ or $P$ indicates whether the type of nodes to which they refer (variable nodes for the first subscript, factor nodes for the second) have regular or Poissonian random degree respectively. We parametrize these ensemble by their average variable and factor degree, that is $d = \mathbb{E}_{D_0}D_0$ and $c =\mathbb{E}_{C_0} C_0$. They are constituted by factor graphs having $N$ variable nodes and $M=\lfloor\frac{N d}{c}\rfloor$ factor nodes but they differ both for their elements and for their probability law. We define the ensembles giving the probability of sampling one of their elements:

\begin{itemize}
\item $\mathbb{G}_{RP}(N,d,c)$: Each element $G$ has $N$ variables and $M=\lfloor\frac{N d}{c}\rfloor$ factors. Every variable node has fixed degree $d$. $G$ is obtained linking each variable with $d$ factors chosen uniformly and independently at random. The factor nodes degree distribution obtained is Poissonian of mean $c$ with high probability. This ensemble is a model for the $k$-set packing  and will be the main focus of our attention.
\item $\mathbb{G}_{PR}(N,d,c)$: Each element $G$ has $N$ variables and $M=\lfloor\frac{N d}{c}\rfloor$ factor. Every factor node has fixed degree $c$. $G$ is built linking each factors with $c$ variables chosen uniformly and independently at random. The variable nodes degree distribution obtained is Poissonian of mean $d$ with high probability.  
\item $\mathbb{G}_{PP}(N,d,c)$: Each element $G$ has $N$ variables and $M=\lfloor\frac{N d}{c}\rfloor$ factors. $G$ is built adding an edge $(i,r)$ with probability $\frac{c}{N}$ independently for each choice of a variable $i$ and a factor $r$. The factor graph obtained has w.h.p Poissonian variable nodes degree distribution of mean $d$ and Poissonian factor nodes degree distribution of mean $c$.
\item $\mathbb{G}_{RR}(N,d,c)$: It is constituted of all factor graphs of $N$ variable nodes of degree $d$ and $M=\frac{N d}{c}$ factor nodes of degree $c$ ($N d$ has to be multiple of $c$). Every factor graph of the ensemble is equiprobable and can be sampled using a generalization of the configuration model for random regular graph \cite{Wormald1999a}. This ensemble is a model for the $k$-set packing.
\end{itemize}
We shall omit the argument $N$ when we refer to an ensemble in the $N\uparrow\infty$ limit.
\subsection{$\mathbb{G}_{RP}(d,c)$} 
\label{grp}
This is the ensemble with variable node degrees fixed to $d$ and Poissonian factor node degrees, that is $C\sim C_0 \sim \mathrm{Poisson}(c)$. The case $d=2$ corresponds to the maximum matching problem on Erd\"os-R\'enyi random graph \cite{Karp1981a,Zhou2003a}.

Density evolution eq. (\ref{deT}) for $\mathbb{G}_{RP}$ reduces to 
\begin{equation}
 T'\deq \max\{0\}\cup\left\{1-\sum_{s=1}^{d-1}T_{sj}\right\}_{j\in \{1,\ldots,C\}}\ .
 \label{detmess}
\end{equation}
Considering distributions of the form $P(t)=p\delta(t)+(1-p)\delta(1-t)$ fixed points of eq. (\ref{detmess}) reads 
\begin{equation}
p_*=e^{-cp_*^{d-1}}\equiv f(p_*).
\label{HSfp}
\end{equation}
The last equation admits only one solution for each value of $d$ and $c$, as it is easily seen through a monotony argument considering the left and right hand side of the equation.
The values of $c$ as a function of $d$ satisfying $|f'(p_*)|=1$ are the critical points $c_s(d)$ delimiting the RS phase ($c<c_s(d))$, and are given by
\begin{equation}
c_s(d) = \frac{e}{d-1}\ .
\label{rpcrit}
\end{equation}
For  $c>\frac{e}{d-1}$ a core survives $\pha$ of the GKS algorithm as stated by Theorem \ref{t1} and showed in Figure \ref{img:ks_d3}. In the matching case, i.e. $d=2$ eq. \eqref{rpcrit} expresses the notorious $e$-phenomena discovered by Karp and Sipser, while for higher values of $d$ provides an extension of the critical threshold.

We can recover the same critical condition eq. (\ref{rpcrit}) through the bug propagation method, as the transfer matrix  $P(b_0\to b_1\ |\ a_0\to a_1)$ has non zero elements only the off-diagonal:
\begin{equation}
\begin{aligned}
P(0\to 1\ |\ 1\to 0)=&p^{d-1}\ ,\\
P(1\to 0\ |\ 0\to 1)=&p^{d-1}\ ,\\
\end{aligned}
\end{equation}
which give  $\lambda_M=p^{d-1}$. The average branching factor is $\overline{C}\,\overline{D} = c(d-1)$ so that eqs. (\ref{bug}) and \eqref{HSfp} yield eq. \eqref{rpcrit}. 
The analytical value for the relative size of MSPs, that is the particle density $\rho$, is
\begin{equation}
\rho(d,c)=\frac{d}{c}(1-p_*) +(1-d)\, p_*^d \qquad\text{for}\quad  c< c_s(d)\ .
\label{rhoT}
\end{equation}  
In Figure \ref{img:ks-bp} we compared  $\rho$ from eq. \eqref{rhoT} as a function of $c$ for some values $d$ with an exact algorithm applied to finite factor graphs (as explained in Section \ref{sec:num}),  both above and below $c_s$. Clearly for $c>c_s$ the RS approximation is increasingly more inaccurate.

\begin{figure}
\centering
\includegraphics[width=0.5\textwidth]{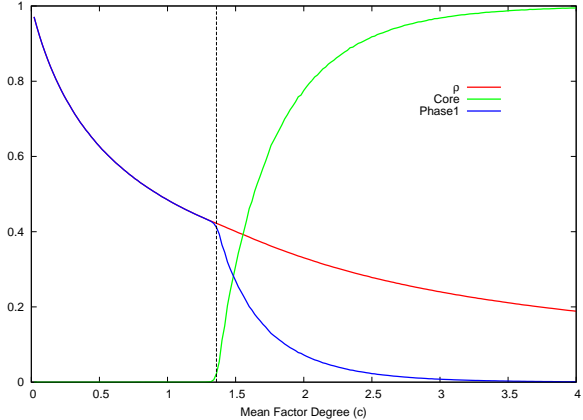}
\caption{Results from GKS algorithm applied to $\mathbb{G}_{RP}(3,c)$. The matching size after $\pha$ and after the algorithms stops are reported. In red is the density of variable nodes left after $\pha$, that is those which belongs the \textit{core}. The vertical line is $c_s=\frac{e}{2}$ and is drawn as a visual aid to determine the point where RS assumptions stop to hold true. Above $c_s$ a non-empty core emerges continuously.}
\label{img:ks_d3}
\end{figure}

\begin{figure}
\centering
\includegraphics[width=0.5\textwidth]{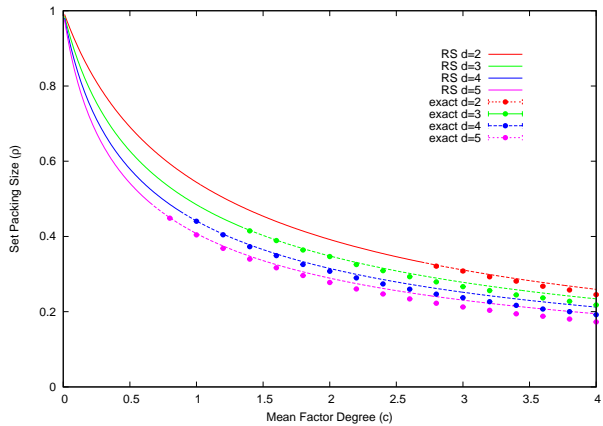}
\caption{RS cavity method analytical prediction eq. \eqref{rhoT} for MSP size in $\mathbb{G}_{RP}(d,c)$ compared with the average size of MSPs obtained from 1000 samples of factor graphs with 100 variable nodes (\textit{dots}). Dashed parts of the lines are the RS estimations in the RSB phase, i.e. for  $c>\frac{e}{d-1}$, where it is no longer exact.}
\label{img:ks-bp}
\end{figure}

We continue our analysis of $\mathbb{G}_{RP}$ above the critical value $c_s$ through the 1RSB cavity method as outlined in paragraph \ref{1RSB}. Fixed point messages 
of (\ref{1rsb1}) are distributed as
\begin{equation}
P(q)=p_0\delta(q)+p_1\delta(q-1)+p_2 P_2(q)
\end{equation}
where 
\begin{equation}
\begin{aligned}
p_1=&e^{-c(1-p_0)^{d-1}}\, \\
p_0=&1-e^{-c p_1^d}\ ,\\
p_2=&1-p_0-p_1\ ,
\end{aligned}
\label{1rsbRP}
\end{equation}
and $P_2(q)$ has to be determined through eq. \eqref{1rsb1}. Equation \eqref{1rsbRP} admits always an RS solution  $p_1=1-p_0=p_*$ which is stable up to $c_s$, as already noticed. Above $c_s$  a new stable fixed point, with $p_2>0$, continuously arises and we study it numerically with a population dynamics algorithm.

The 1RSB free energy, as a function of the Parisi parameter $y$, takes the form 
\begin{equation}
\begin{aligned}
\phi(y)=&-\frac{d}{y c}\ \mathbb{E}\,{\log\Big[(1-e^y)\prod^{C}_{j=1}(1-\prod_{s=1}^{d-1}q_{sj})+e^y\Big]}+\\
&+\frac{d-1}{y}\ \mathbb{E}\,{\log\Big[(1-e^y)(1-\prod_{r=1}^{d}q_r)+e^y\Big].}
\end{aligned}
\label{1RSBfree}
\end{equation}
As prescribed by the cavity method, the value $y_s$ which maximizes $\phi(y)$ over $[0,+\infty]$ yields the correct free energy, therefore we have $\phi(y_s)=-\rho_{\text{{\tiny 1RSB}}}$.

Unsurprisingly, as they belong to different computational classes, the cases $d=2$ and $d\geq 3$ show qualitatively different pictures.
In the case of maximum matching on the Poissonian graph ensemble, numerical estimates suggest that complexity is an increasing function of $y$ on the whole real positive axis. Correct choice for parameter $y$  is then $y_s=+\infty$, as already conjectured in \cite{Zhou2003a}, and we find that maximum matching size prediction from 1RSB cavity method fully agrees with rigorous results from Karp and Sipser \cite{Karp1981a} and with the  size of the matchings given by their algorithm (see Figure \ref{img:1rsb-d2}). The 1RSB ansatz is therefore exact for $d=2$.

\begin{figure}
\centering
\includegraphics[width=0.5\textwidth]{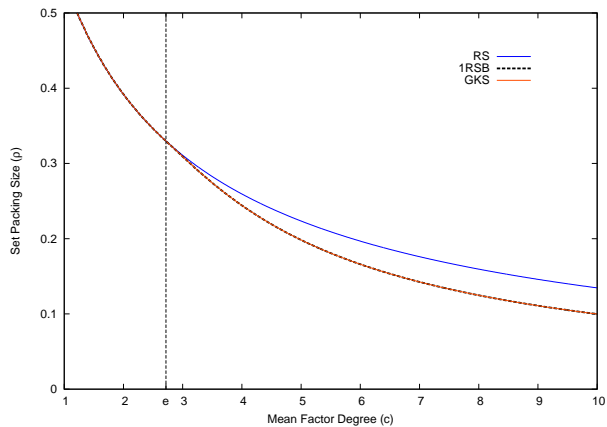}
\caption{Maximum matching size as a function of $c$ for $d=2$. Continuous line corresponds to RS analytical value eq. (\ref{rhoT}), data points are obtained from Karp-Sipser Algorithm and from 1RSB solution eq. (\ref{1RSBfree}) with $y\gg 1$. Karp-Sipser and 1RSB cavity method yield the same and exact result.}
\label{img:1rsb-d2}
\end{figure}

The $d\ge 3$ case  analysis does not yield such a definite result. The complexity of states $\Sigma$ is no more a strictly increasing function of $y$. It reaches its maximum in $y_d$, the choice of $y$ that selects the most numerous states, which could be those where local search greedy algorithms are more likely to be trapped. Then it decreases up to the finite value $y_s$ where complexity changes sign and takes negative values. Therefore $y_s$ is the correct choice for the Parisi parameter which maximizes $\phi(y)$. Plotted as a function of $\rho$, complexity $\Sigma$ has a convex non-physical part, with extrema the RS solution (on the right) and the point corresponding to the dynamic 1RSB solution (on the left), and a concave physically relevant for $y\in[y_d,y_s]$ (see Figure \ref{img:compl-d3}). The 1RSB seems to be in very good agreement with the exact algorithm and we are inclined to believe that no further steps of replica symmetry breaking are needed in this ensemble. The GKS algorithm instead falls short of the exact value, therefore it constitutes a lower bound which is not strict but at least it could probably be made rigorous carrying on the analysis of the GKS algorithm beyond $\pha$.

\begin{figure}
\centering
\subfigure{\includegraphics[width=0.5\textwidth]{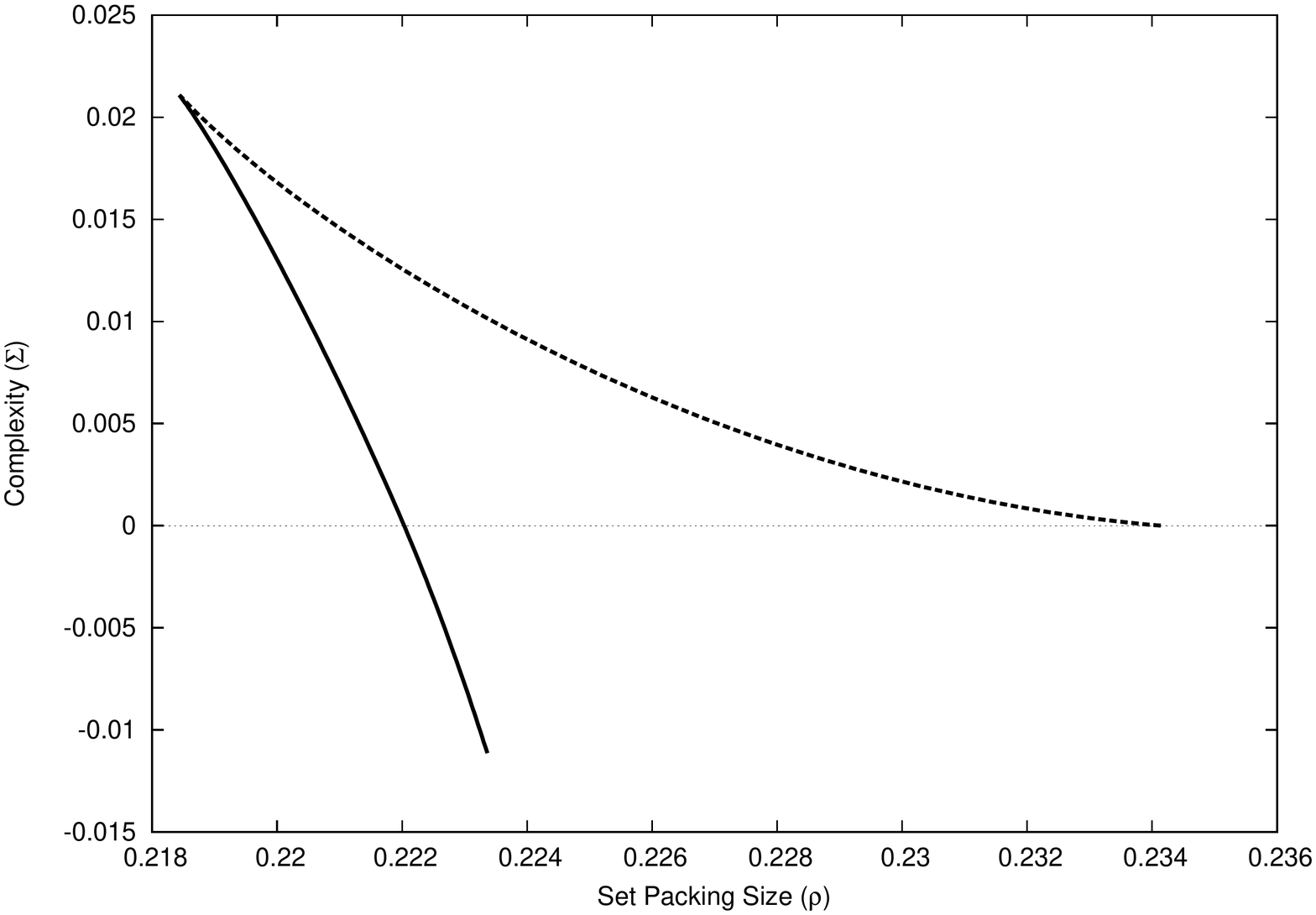}}
\caption{Complexity in $\mathbb{G}_{RP}(d,c)$, with $d=3$ and $c=4.0$, as a function of the relative MSP size $\rho$.}
\label{img:compl-d3}
\qquad
\subfigure{\includegraphics[width=0.5\textwidth]{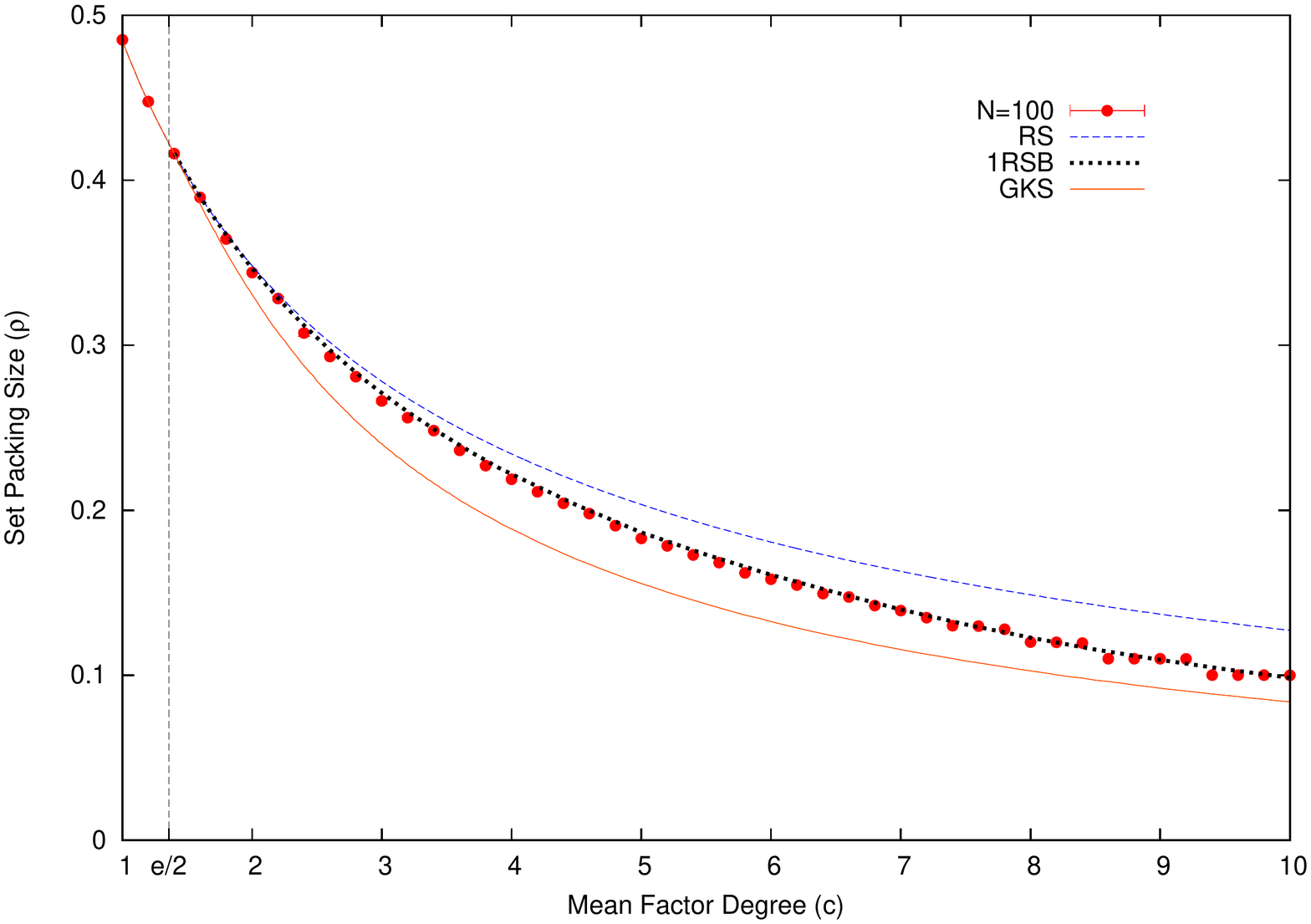}\label{img:1rsb-d3}}
\caption{MSP size in $\mathbb{G}_{RP}(3,c)$ as a function of $c$. RS and 1RSB cavity method are confronted with the GKS algorithm and with an exact algorithm on factor graphs of 100 variable nodes.}
\end{figure}

\subsection{$\mathbb{G}_{PR}(d,c)$}
The ensemble $\mathbb{G}_{PR}(d,c)$ is constituted of factor graphs containing factor nodes of degree fixed to $c$ and variable nodes of Poissonian random degree of mean $d$. It has statistical properties with respect to the MSP problem quite different from those encountered in
$\mathbb{G}_{RP}$, as we will readily show. The MSP problem on $\mathbb{G}_{PR}(d,c)$ with $d=2$ is equivalent to the well known problem of maximum independent sets on Poissonian graphs \cite{Pin,Sanghavi2009}. 
The real parameter $p$ characterizing discrete support distributions of messages has to satisfy the fixed point eq. (\ref{fpTp}), that reads
\begin{equation}
p_*=\big(1-e^{d(p_*-1)}\big)^{c-1}\ .
\label{PRfp}
\end{equation} 
At fixed value of $c$ the RS ansatz holds up to the critical value $d_s(c)$, which is implicitly given by first derivative condition ($\ref{tstab2}$):
\begin{equation}
(c-1)p_*^{\frac{c-2}{c-1}}e^{(p_*-1)d_s} d_s =1\ .
\label{PRcrit}
\end{equation}

Although critical condition eq. \eqref{PRcrit} is not as elegant as the one we obtained for the ensemble $\mathbb{G}_{RP}$, it can be easily solved numerically for $d_s$ as a function of $c$. For $c=2$ the threshold value is exactly $d_s(2)=e$. For $c>2$ instead $d_s$ is an increasing function of the factor degree $c$.
Thanks to eq. (\ref{rho}) we readily compute the MSP  size in the RS phase:

\begin{equation}
\rho=\frac{d}{c}\Big(1-p_*^{\frac{c}{c-1}}\Big)+(1-p_* d)e^{d(p_*-1)} \quad\text{for}\quad  c< c_s(d)\ .
\label{rhoPR}
\end{equation}
We can see in Figure \ref{img:PRrho} that the MSP size $\rho(d,c)$ is a decreasing function in both arguments as expected. Equation \eqref{rhoPR} can be taken as the RS estimate for MSP size for $d>d_s(c)$. The RS estimate is strictly greater than the average size of SPs given by the GKS algorithm at all values of $d>d_s(c)$ (see Figure 7). 
\begin{figure}
\centering
\includegraphics[width=0.5\textwidth]{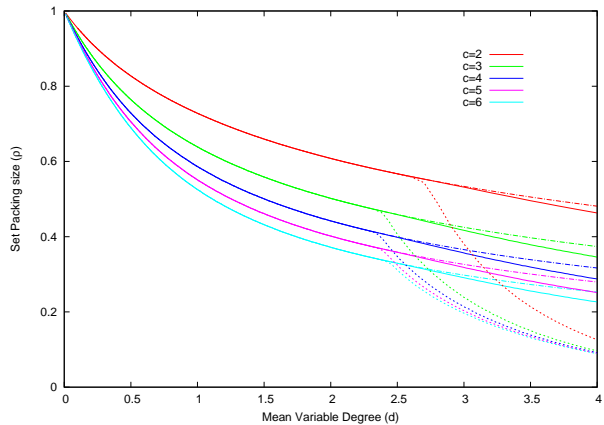}
\caption{RS cavity method analytical prediction eq. \eqref{rhoPR} for MSP size in $\mathbb{G}_{PR}(d,c)$ (\textit{dot-dashed}) compared with the set packing size from $\pha$ of the GKS algorithm (\textit{dotted}) and with a complete run of the algorithm (\textit{continuous}).}
\label{img:PRrho}
\end{figure}

\subsection{$\mathbb{G}_{PP}(d,c)$}
\begin{figure}[h]
\centering
\includegraphics[width=0.5\textwidth]{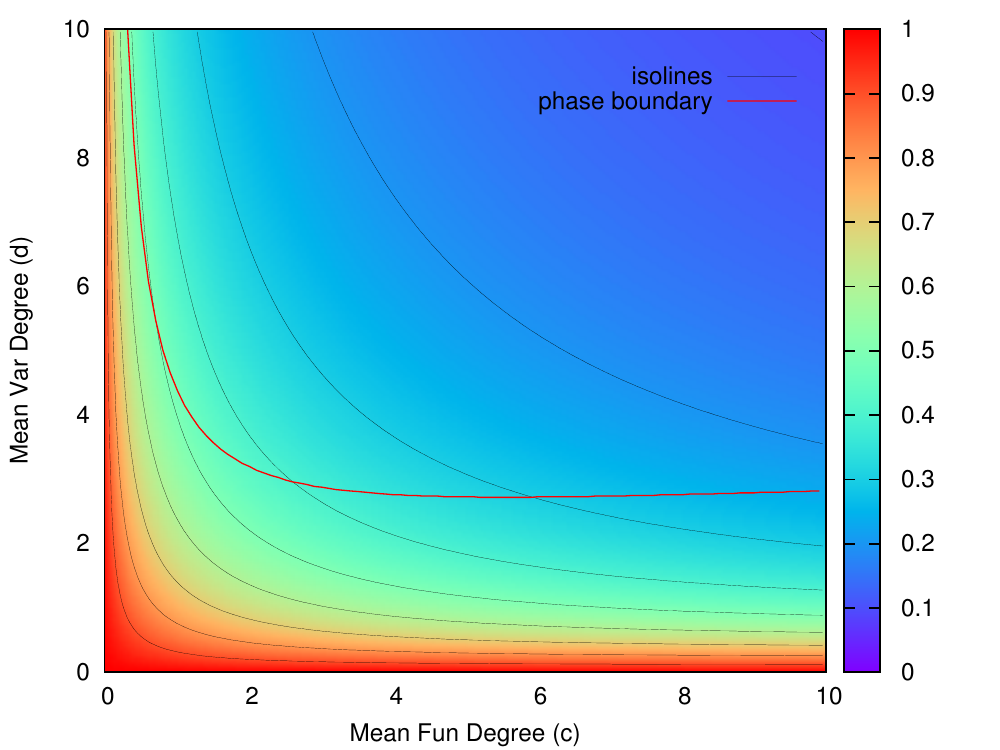}
\caption{MSP relative size for the ensemble  $\mathbb{G}_{PP}(d,c)$ as given by eq. (\ref{rhoPP}). Above the boundary separating RS and RSB phase values are only approximation to the exact value of $\rho$.}
\label{img:PPrho}
\end{figure}
We shall now briefly examine our MSP model on the ensemble $\mathbb{G}_{PP}(d,c)$ where both factor nodes and variable nodes have Poissonian random degrees of mean $c$ and $d$ respectively. From eqs. (\ref{fpTdef}) and (\ref{fpTp}) we obtain the fixed point condition for the probability distribution of messages:
\begin{equation}
p_*=e^{-c e^{d(p_*-1)}}\ .
\label{pPP}
\end{equation}
As usual $p$ is the parameter characterizing the distribution of messages,$P(t)=p\delta(t)+(1-p)\delta(1-p)$. Equation (\ref{pPP}) admits one and only one fixed point solution $p_*$ for each value of $d$ and $c$. In fact $f$ is continuous, strictly decreasing and $f(0)>0,\ f(1)<1$. 
The first derivative condition $|f'(p)|=1$ defines the critical line $d_s(c)$ through
\begin{equation}
  d_s(c) =-\frac{1}{p_* \log( p_*)}\ .
\end{equation}
The curve $d_s(c)$ separates the RS phase from the RSB phase in the $c-d$ parametric space (see Figure \ref{img:PPrho}). The unbounded RS region  shares some resemblance with the corresponding (although $d$-discretized) region of $\mathbb{G}_{PR}(d,c)$ and is at variance with the compact area of the RS phase in $\mathbb{G}_{RP}$.

We can compute the MSP relative size $\rho$ through eq. (\ref{rho}) and obtain
\begin{equation}
\rho=\frac{d}{c}(1-p_*)+(1-d p_*)e^{d(p-1)}
\quad\text{for}\quad  d< d_s(c)\ ,
\label{rhoPP}
\end{equation}
which holds only as an approximation in the RSB phase.
We can see from Figure \ref{img:PPrho} that $\rho$ is decreasing both in $c$ and $d$ as was observed in the other ensembles as well, and

\subsection{$\mathbb{G}_{RR}(d,c)$}
\label{par_grr}
The MSP problem on the ensemble $\mathbb{G}_{RR}(d,c)$, the straightforward generalization to factor graphs of the random regular graphs ensemble, poses some simplification to the cavity formalism thanks to his homogeneity. It  has already been  object of a preliminary studies  by M. Weigt and A. Hartmann \cite{Weigt2003} and then a much more deep work of M. Weigt and H. Hansen Goos \cite{Hansen-Goos2005} who disguised it as a hard spheres model on a generalized Bethe lattice. The authors studied through the cavity method this hard spheres model on $\mathbb{G}_{RR}(d,c)$ both at finite chemical potential $\mu$ and in the close packing limit and found out that the 1RSB solution  is unstable in the close packing limit, therefore suggesting the need of a fullRSB treatment of the problem. 

\section{Conclusions}
We studied the average asymptotic behaviour of random instances of the maximum set packing problem, both from a mathematical and a physical viewpoint. We contributed to the known list of models where the replica symmetric cavity method can be proven to give exact results, thanks to the generalization of an algorithm (and of its analysis) first proposed by Karp and Sipser \cite{Karp1981a}. Moreover, our analysis address a problem reported in recent work on weighted maximum matchings and independent sets on random graphs\cite{Gamarnik2008}, where the authors could not extend their results to the unweighted cases. We achieve here the desired result making use the grand canonical potential instead of the direct computation of single variable expectations. We also extend their condition for the system to be in what physicists call a replica symmetric phase, namely the uniqueness of the fixed point of the square of a certain operator (which is the analogue of the one defined in eq. \eqref{def-f}), to the more general setting of maximum set packing (although without weights).
On some problem ensembles, where the assumptions of Theorems \ref{t1} and \ref{t2} no longer hold and the RS cavity method fails, we used the 1RSB cavity method machinery to obtain an analytical estimation of the MSP size. Numerical simulations show very good agreement of the 1RSB estimation with the exact values, although comparisons have been done only with small random problems due to the exact algorithm being of exponential time complexity. The GKS algorithm instead fails in general to find MSPs but in some special cases.

Some questions remain open to further investigation. To validate the 1RSB approach the stability of the 1RSB solution has to be checked against more steps of replica symmetry breaking. Moreover a thorough analysis of the second phase of the GKS algorithm could shed some light on the mechanism of replica symmetry breaking and give a rigorous lower bound to  the average maximum packing size.

\section*{Acknowledgements}
This research has received financial support from the European
Research Council (ERC) through grant agreement No. 247328 and from the
Italian Research Minister through the FIRB project No. RBFR086NN1.

\bibliography{bibliography}

\end{document}